\documentclass[aps, pra, twocolumn, superscriptaddress]{revtex4-2}
\usepackage[utf8]{inputenc}
\usepackage{amsmath, amsthm, amssymb}
\usepackage{amsthm}
\usepackage[utf8]{inputenc}
\usepackage[T1]{fontenc}
\usepackage{mathtools}
\usepackage[thinc]{esdiff}
\usepackage{amsfonts}
\usepackage{tikz}
\usepackage{braket}
\usepackage[english]{babel}
\usepackage{comment}
\usepackage{caption}
\usepackage{subfigure}
\usepackage{color}
\usepackage{graphicx} 
\usepackage[colorlinks=true,linkcolor=blue,urlcolor=blue,citecolor=blue]{hyperref}
\newtheorem{prop}{Proposition}

%\definecolor{darkraspberry}{rgb}{0.53, 0.15, 0.34}
%\definecolor{cadetblue}{rgb}{0.37, 0.62, 0.63}
%\definecolor{etonblue}{rgb}{0.59, 0.78, 0.64}
\definecolor{green(munsell)}{rgb}{0.0, 0.7, 0.47}

\begin{document}
\title{Non-Markovian noise limits for sustaining entanglement in multiparty quantum states}

\author{Suchetana Goswami}
\email{suchetana.goswami@gmail.com}

\affiliation{Harish-Chandra Research Institute, A CI of Homi Bhabha National Institute, Chhatnag Road, Jhunsi, Allahabad 211 019, India}

\affiliation{School of Informatics, University of Edinburgh, 10 Crichton St, Newington, Edinburgh EH8 9AB, UK}

\author{Ujjwal Sen}
\email{ujjwalsen0601@gmail.com}

\affiliation{Harish-Chandra Research Institute, A CI of Homi Bhabha National Institute, Chhatnag Road, Jhunsi, Allahabad 211 019, India}

\begin{abstract}
    %\textcolor{red}{[to check again!]} 
    
    The principal resource in several quantum information processing tasks 
    %and in quantum computation 
    is  entanglement. 
    %Entanglement is a non-local feature i.e. it can exist among spatially separated parties. 
    Building 
    %While we deal with 
    %actual implementations 
    practically useful versions
    of such tasks, one often needs to consider multipartite systems. 
    And in such scenarios, it is 
    %In quantum domain, the physical realisation of these protocols, it is almost next to 
    impossible to 
    %trace out 
    eliminate 
    the effects of environment while distributing the 
    %computational 
    resources. 
    %The dynamics of the system's state in presence of a given type of noise can be realised via solving the master equation. While there are many theoretical techniques to nullify the effects of noise, it is very difficult to completely avoid them. 
    %In this paper, we 
    %explicitly 
    We
    show how information back-flow from the environment, as a result of 
    %i.e. introduction of 
    non-Markovianity in the system dynamics, affects the resources in multipartite systems, specifically multiparty entanglement.
    %, as quantified by \suchetana{the logarithmic negativity among all possible bipartitions.}
    %We consider particularly two classes of states, GHZ and W states. Considering mainly two types of well-known noise models we find the robustness of these states. 
    In particular, we find that entanglement of multi-qubit Greenberger-Horne-Zeilinger cat states saturate with time to a non-zero value for a non-Markovian dephasing noise. The multi-qubit W states saturate to a higher value. Such non-zero time-saturated values are washed out if the non-Markovianity in the channel is removed. The multi-qubit W states provide a richer set of features, including a non-zero value for an asymptotic number of qubits and an even-odd dichotomy in entanglement with number of qubits. Moving over to a non-Markovian depolarising channel, we find that both cat and W states exhibit revival of entanglement after collapse. 
%    It is evident that W class states retain more amount of entanglement while interacting with dephasing channel. Interestingly it also exhibits an even-odd dichotomy when we consider entanglement with respect to a particular bipartition. Next we elaborate how non-Markovian depolarising channel is more effective to retrieve the lost entanglement across all bipartitions for a given state compared to the standard Markovian counterpart.
\end{abstract}

\maketitle

\section{Introduction}

%1) non-Markovian dephasing
%a) given n, n-qubit ghz ent with time saturates to a nonzero value. n-qubit W saturates to an even higher value. removing the nonmarkovianity, the saturated values are zero. n-qubit W saturates to a nonzero value even for large n. Plus there is an even-odd dichotomy for W.
%2) non-Markovian depolarizing
%a) given n, n-qubit ghz. there is collapse and revival wrt time. same for W.

The concept of entanglement remains perhaps the most important area of research in quantum information \cite{HHHH_09, ent2}, due to its crucial significance in applications in quantum 
%\suchetana{computation} 
technologies
as well as in fundamental aspects of quantum theory. Implementation of different information processing tasks or quantum computation in a distributive scenario \cite{EW_02, RP_15, ZZ_19, GH_23, KKI_99, MS_08} involve the presence of multipartite entanglement in
%a given system.
the physical substrate.
Despite its overwhelming importance, multipartite entanglement has been explored to a much lesser extent than the bipartite one. \\

%An important stream of research in quantum information tries to identify 
%In the present era of quantum information theory, one of the most blooming research genres is 
%potential advantages that the near-term quantum computing devices can provide. In general, the superposition principle in quantum mechanics provides 
%an immense 
%advantage for many tasks when compared to their classical counterparts. Superposition of quantum states,  
%predominantly 
%for multiparty systems, almost always leads to entanglement in the system. Though most of the quantum algorithms require multipartite entanglement to provide speed-up, it has been aimed to avoid or minimise the presence of entanglement in the initial state as it is an expensive resource \cite{LP_01, JL_03, DV_07, DSC_08, LBAW_08, GFE_09, PML_12, N_13, NKGFS_22, GH_23}. Alongside, quantum entanglement plays the role of necessary resource in different key distribution protocol, teleportation and many more \cite{BB_84, E_91, BW_92, BBCJPW_93}. The concept of entanglement, that was initially introduced by Schr\"{o}dinger \cite{S_35} to explain the Einstein-Podolsky-Rosen paradox \cite{EPR_35}, still remains the most important and ever-expanding area of research in quantum information theory. Unlike the $2$-qubit scenario, the multipartite domain is much less explored.\\
%While the $2$-qubit situation is well-explored, if we move on to the multipartite domain, the results are far from exhaustive.\\

Whenever we consider the physical realisation of any information processing task, a key factor to keep in mind is the interaction of the system with the environment. In the literature, there are different theoretical ways to handle or filter out  certain types of errors occurring due to the unavoidable presence of noise~\cite{PVK_97, KU_99, LB_03, SL_05, KJ_06, BSSL_07, BNPV_08, NSSFD_08, KLKK_12, DGPM_17, GGM_21},
%~\textcolor{cyan}{[ekhane aro paper cite karte habe. like the works on decoherence-free subspaces, error-resistance or noise-resistance using quantum neural networks, topological quantum computation, etc]}
but they are not always feasible in implementation. In general, the dynamics of such a system is governed by a non-unitary evolution, and with certain restrictions, can be described by the Gorini–Kossakowski–Sudarshan–Lindblad (GKSL) master equation~\cite{GKS_76, L_76}, which provides a completely positive and trace-preserving (CPTP) evolution of the system with time.
%\textcolor{red}{The interaction of the system with the environment i.e. the dynamics of the open system can be most generally governed by an equation of evolution, namely Gorini–Kossakowski–Sudarshan–Lindblad (GKSL) master equation \cite{GKS_76, L_76}. This is nothing but the Schr\"{o}dinger equation for the open quantum system. The equation gives a completely positive and trace-preserving (CPTP) evolution of the system.} 
The dynamics is often assumed to be ``Markovian", and the typical assumptions involved include 
%There are a few initial assumptions to be taken into consideration for deriving the equation for a Markovian dynamics. 
an uncorrelated system-environment initial state and 
%The initial state of the system and the environment are not correlated, 
an unaltered environment state whenever it interacts with the system~\cite{BP_02}. 
%at every moment of the evolution the environment state remains unaltered, 
%and the dynamics at every instant does not retain any information from the previous instant (i.e. the dynamics is memory-less) \cite{BP_02}. When any of these initial conditions are not taken into account, then there is a possibility 
%of information back-flow from the environment to the system  which 
A typical dynamics of an open quantum system is often non-Markovian and that can be theoretically realised by removing the initial assumptions that are made in the Markovian case \cite{SL_09, BDMRR_13, LT_14, B_14, SL_16, SRS_19, CNB_22}.
%~\textcolor{cyan}{[u can refer to~\cite{CNB_22}, but a lot more papers need to be referred here]}. 
%
%of observing the features of non-Markovianity in the dynamics \cite{CNB_22}. 
There are different ways to detect and quantify non-Markovianity, and two popular ones are the Breuer-Laine-Piilo (BLP)~\cite{BLP_09} and the Rivas-Hulega-Plenio (RHP) measures~\cite{RHP_10}. The BLP measure 
%was introduced by Breuer et. al. and it 
is based on distinguishability of the evolved states, which basically arises from the concept of information back-flow from environment to the system~\cite{BLP_09}. On the other hand, 
%Rivas et. al. introduced the 
the RHP measure is conceptualised by checking for 
non-monotonicity in the dynamics of system-auxiliary entanglement~\cite{RHP_10}.\\

%divisibility of quantum states \cite{RHP_10}. 
%Further, there have been detailed studies of CP-divisibility and P-divisibility of the dynamical maps linking them to non-Markovian noise \cite{CW_13, CM_14}. 

%\textcolor{cyan}{[
In this paper, we consider non-Markovian versions of physically-motivated noisy quantum environments, and find their effects, when applied locally, on  multipartite entanglement.
%We obtain the dynamics of the system by solving the master equation at every instant of time in the non-Markovian domain.-eTa hoito baad debo!]}\\
%
%We consider 
%
%As stated earlier, multipartite entanglement as a resource is a flourishing area of research in quantum information theory. 
We consider the effects on two paradigmatic  sets of multi-qubit quantum states, viz. the Greenberger-Horne Zeilinger (GHZ) ``cat'' states \cite{GHZ_89, M_90} and the W  states \cite{chabbis, DVC_00, SSWKZ_03}.
%~\textcolor{cyan}{[pls refer to A. Zeilinger, M. A. Horne, and D. M. Greenberger, in Proceedings of Squeezed States and Quantum Uncertainty, edited by D. Han, Y. S. Kim, and W. W. Zachary [NASA Conf. Publ. \textbf{3135}, 73 1992];\url{https://journals.aps.org/pra/pdf/10.1103/PhysRevA.68.062306}]}. 
These states are local unitarily equivalent for two parties, but already for three  parties, they are  not inter-convertible via stochastic local quantum operations and classical communication (SLOCC)~\cite{DVC_00}
%, ABLS_01}.
%These are Greenberger-Horne Zeilinger (GHZ) class states and W class states. 
%In multipartite domain, these states are the main resources in many information processing tasks as they are maximally entangled in terms of different quantifiers.
These states have appeared in numerous applications in quantum information protocols, and there have been a number of studies dealing with the entanglement content of these states in presence of noise \cite{SK_02, CMB_04, HDB_05, GBB_08, YE_09, KCP_14, MH_21, PWLXLW_23, XIG_23}.
%\textcolor{cyan}{[matro ei tinTe achhe ei line-e?!! pls abar ekbar bhalo kare dyakho! Edit: In particular, \cite{PWLXLW_23, XIG_23} ekhane diye dao.]}. 
%While there are ways to tackle the noise due to environmental interaction, it is difficult to avoid them. Hence the robustness of such multipartite states in terms of their entanglement content, under the presence of these noises have been thoroughly studied, mostly in Markovian domain. The noise models are taken in form of some CPTP maps represented by the corresponding Kraus operators. 
In particular, the robustness of GHZ states in the asymptotic limit has been studied in presence of depolarising noise in~\cite{SK_02}.
The effects of noise on multipartite entanglement has largely been considered, hitherto, within the Markovian domain.\\

%Here, the noise model is considered in terms of mixing of white noise with the state in hand. The authors show that the particular state is $55\%$ robust while the number of qubit increases asymptotically. The very symmetric structure of the GHZ state helps to evaluate such fixed numerical value of robustness in the given situation. Following this article, there are many works dealing with entanglement behaviour of the multipartite states in presence of different Markovian noise models \cite{PWLXLW_23, XIG_23}. \\

%In this paper we do not restrict ourselves to the operator sum representation of a particular channel. Rather we consider the dynamics of system interacting with a given reservoir with time and characterise the entanglement properties of the evolved state. 
We show that the presence of non-Markovianity in the system
%-environment 
dynamics 
%is
is potentially
helpful for sustaining entanglement of the state of the system. We notice that the effects of a particular noise on the two different 
%classes 
sets 
of states are significantly 
%distinctly 
different. 
We begin with the dephasing channel, with a non-Markovian element,
%At first we consider the standard dephasing channel 
and show that the multi-qubit W state is more robust than the multi-qubit GHZ in sustaining their entanglement contents. %In presence of this type of noise 
We find a dichotomy between even and odd number of qubits in case of multipartite W states when a particular bipartition is considered. 
%We then also extrapolate our explicit findings to analyse the entanglement characterisation for large number of qubits. 
Next, we consider depolarising noise, and show that it is possible to retrieve the entanglement of the output state by introducing a memory effect in the dynamics. This is true for both W and GHZ states.\\

The paper is organised in the following way. In Section \ref{sec_2}, we introduce the initial systems and the noise models we will be working with. We also portray the exact structure of the protocol. Next, in Section \ref{sec_3}, we discuss 
%extensively
in some detail 
about the properties of the final state entanglement in presence of non-Markovian versions of the considered channels. We present a conclusion in Section~\ref{sec_4}.
%Finally we conclude our discussions with some future direction to be explored in Section \ref{sec_4}.

\section{The initial system and the decoherence models}
\label{sec_2}

In this paper, we consider some noise models that may arise in realistic situations and we characterise their effects on multipartite systems, where each party is assumed to be holding one qubit. The initial state of the system is considered to be either a GHZ or a W state. We also assume that the system and the environment are completely uncorrelated in the beginning of the dynamics. Mathematically the joint initial state of the system and environment is hence considered to be in a product form. For any $N$-qubit system, the GHZ state 
%that we consider, 
has the following form,
\begin{equation}
    \ket{\psi}_{GHZ}^N = \frac{1}{\sqrt{2}} (\ket{0}^{\otimes N} + \ket{1}^{\otimes N}).
    \label{ghz_state}
\end{equation}
On the other hand, the $N$-qubit W state 
%that has been taken into account, 
can be written as
\begin{eqnarray}
    \ket{\psi}_{W}^N &=& \frac{1}{\sqrt{N}} (\ket{0}^{\otimes (N-1)} \ket{1} + \ket{0}^{\otimes (N-2)} \ket{1} \ket{0}+ \nonumber\\
    && \cdots + \ket{0}\ket{1} \ket{0}^{\otimes (N-2)}
    + \ket{1} \ket{0}^{\otimes (N-1)}).
    \label{w_state}
\end{eqnarray}
The state in Eq. (\ref{w_state}) is a single-mode excited W  state. We also consider multi-mode excited
%all other combinations possible, starting from the above one to $(N-1)$ mode excited 
W states. Note that the GHZ and the single-mode excited W states are 
%Note that the initial states considered are 
SLOCC-inequivalent. In \cite{SK_02}, the authors have dealt with the robustness of GHZ states in presence of depolarising noise while the number of parties increases asymptotically. They considered the form of depolarising noise where the initial state is modified by the presence of local white noise, and the channel acting on every party transforms the input of that party into a probabilistic mixture in the following way: $\rho \rightarrow (1-p) \rho + p~\mathcal{I}/d$, where $\mathcal{I}$ is the corresponding identity matrix, \(p\) is the probability of the noise admixed, and \(d\) is the corresponding local Hilbert space dimension. While this is an interesting way for describing noise, 
%broadly describing the particular noise model, 
it is a somewhat restrictive description of the interaction (of the system with its environments), as it does not explicitly describe evolution of an initial state with time. The dynamics of the state of the system with time under such noise can only be evaluated via solving the master equation governing the evolution of the state under such noise, with the exact form of the master equation being derived from the interaction Hamiltonian of the system with the environment.

\begin{figure}[htp]
    \fbox{\includegraphics[scale=0.12]{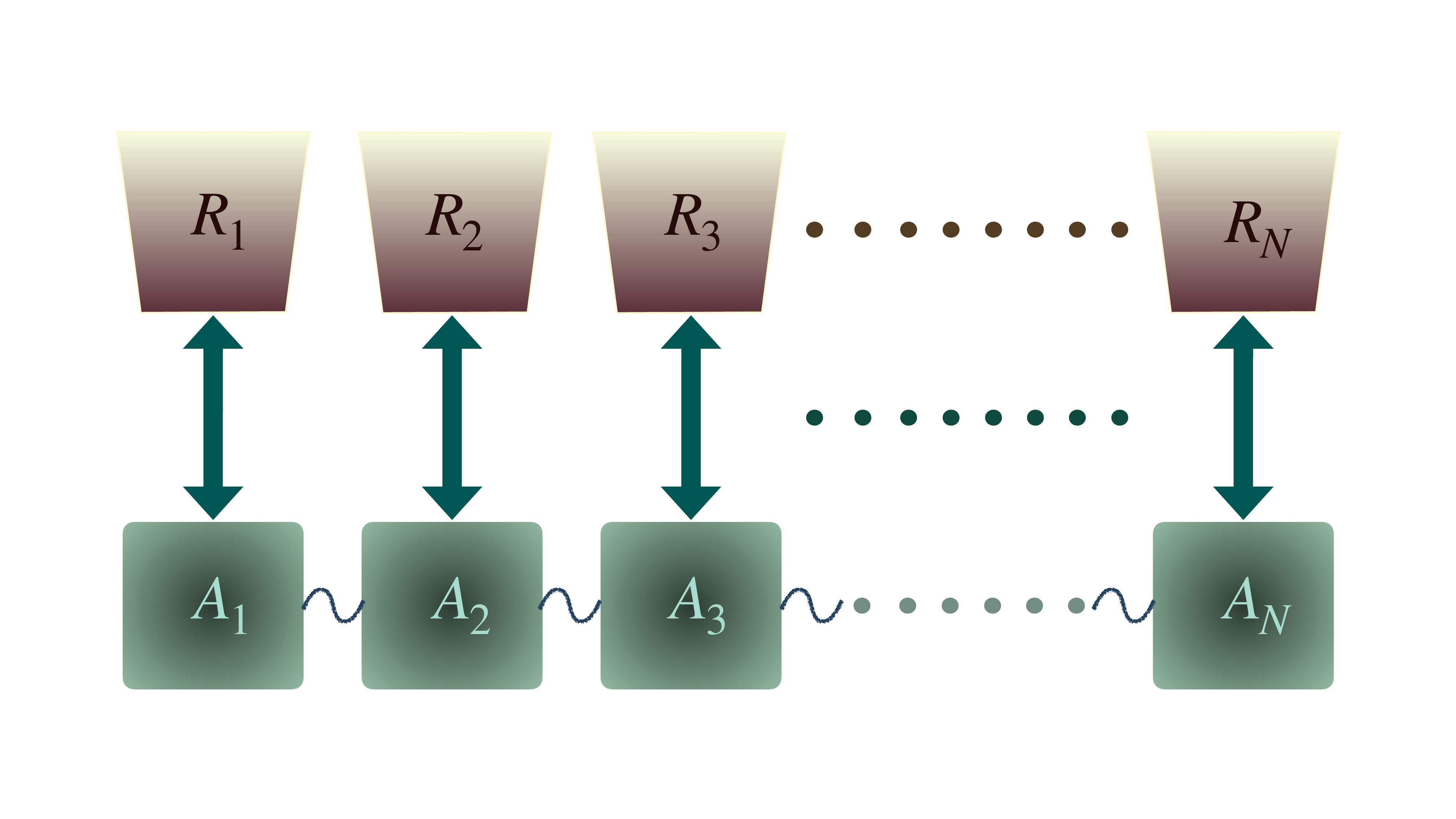}}
    \caption{\footnotesize{Schematic diagram of system-environment interaction. Here $N$ parties, $A_1$ to $A_N$ each holding a single qubit, share an entangled state and each qubit of the system is interacting with $N$ identical non-interacting reservoirs, $R_1$ to $R_N$.}}
    %at a given instant of time.}}
    \label{int_NM}
    \end{figure}

In our prescription, we have an $N$-qubit system initialised in either of the states given in Eqs.~(\ref{ghz_state}) and~(\ref{w_state}). These $N$ qubits are considered to be interacting with $N$ reservoirs individually.
%at a given moment of time $t$. 
We consider that all the reservoirs describing the environment, are identical and the interactions between any qubit and its environment are also considered to be similar. We illustrate the scenario in Fig.~\ref{int_NM}. Hence our protocol corresponds to a model where the system goes through local interaction with a noisy environment. 
%This occurs during implementation of quantum optimisation algorithms \cite{SG_21}. 
The system-environment cluster evolves for a time \(\tilde{t}\), 
%The same situation is repeated for every instant of time 
and we are interested in the entanglement characterisation of the final state.
%after a certain amount of time has elapsed passed as well as at every instant of time. 
For a single qubit, the evolution of the state with time is governed by the time-local master equation,
\begin{eqnarray}
&&   \diff{\rho}{(\omega_0 \tilde{t})} = \dot{\rho}(t) = -\frac{i}{\hbar \omega_0} [H(t),\rho(t)] \nonumber\\
    && +\frac{1}{\omega_0}\sum_k \gamma_k (t) \left( L_k (t) \rho(t) L_k^{\dagger}(t)  -\frac{1}{2} \{L_k^{\dagger}(t) L_k(t) , \rho(t) \} \right).\nonumber\\
\end{eqnarray}
We call, $\omega_0 \tilde{t}=t$. Here, $H(t)$ is the Hamiltonian of the system under consideration, $L_k(t)$ are the Lindblad operators corresponding to the $k$-th bath, and $\gamma_k(t)$ are the corresponding rates of dissipation. The first part of the above equation represents the unitary part of the dynamics, 
%denotes the Schr\"{o}dinger part 
while the next part governs the dissipative scenario (interaction picture). The time dependence of the decay rates introduces  memory effects in the system-bath interaction which in-turn brings in  non-Markovian features in the dynamics of the \(N\)-qubit system.

Before moving on to the explicit description of the noise models, here we discuss about some particular features of the initial states considered. Note that, the $N$-qubit states considered are symmetric across any $(N-m)~vs.~m$ bipartition, where any $m$ parties among $N$ (holding an $m$-qubit system) are standing together and the rest $(N-m)$ parties are also standing together. This is how the bipartition is created. For even number of qubits, $1\leq m \leq N/2$ and for odd number of qubits, $1\leq m \leq (N-1)/2$. In the following we give an example for better understanding. Let us consider a five-qubit system, i.e. $N=5$ distributed among five parties $A_1$, $A_2$, $A_3$, $A_4$, and $A_5$. In this case, there can be two unique values for $m$ and they are $m=1$ and $m=2$.
\begin{prop}
    For a five-qubit system, the GHZ and the W states are symmetric across all possible unique bipartitions.
\end{prop}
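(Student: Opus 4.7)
The plan is to exploit the manifest permutation symmetry of both the GHZ and the W states. I interpret the statement ``symmetric across a bipartition'' in the sense indicated by the preceding paragraph, namely that for each fixed split size $m$, the bipartite structure of the state (in particular, any bipartite entanglement measure) does not depend on \emph{which} concrete subset of $m$ qubits is placed on one side of the cut; equivalently, there is really only one $m\,{:}\,(N-m)$ bipartition up to labelling.

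First, I would verify by direct inspection that $\ket{\psi}_{GHZ}^{5}$ and $\ket{\psi}_{W}^{5}$ belong to the symmetric subspace of five qubits. The GHZ vector is a superposition of $\ket{00000}$ and $\ket{11111}$, both of which are invariant under every qubit permutation. The W vector is the equal-amplitude sum over all five single-excitation basis vectors of $(\mathbb{C}^{2})^{\otimes 5}$; since any permutation $\pi\in S_{5}$ merely permutes these five basis vectors among themselves, the W state is also invariant under $\pi$.

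Second, I would translate this invariance into bipartition-independence. For $N=5$ the only inequivalent split sizes are $m=1$ and $m=2$. Given two bipartitions $(S,\bar{S})$ and $(S',\bar{S}')$ of the qubit labels with $|S|=|S'|=m$, choose any permutation $\pi\in S_{5}$ with $\pi(S)=S'$, and let $U_{\pi}$ be the associated unitary on $(\mathbb{C}^{2})^{\otimes 5}$. Permutation symmetry gives $U_{\pi}\ket{\psi}=\ket{\psi}$ for $\ket{\psi}\in\{\ket{\psi}_{GHZ}^{5},\ket{\psi}_{W}^{5}\}$; moreover $U_{\pi}$ factorises as a product of a unitary on the subsystem indexed by $S$ and one on the subsystem indexed by $\bar{S}$ (after the obvious relabelling). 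Tracing out one side of the cut then shows that the reduced state on $S$ and the reduced state on $S'$ are related by a local unitary, so every bipartite entanglement monotone takes the same value on both bipartitions. Enumerating the two values $m=1,2$ covers all unique bipartitions for $N=5$.

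The main obstacle is purely one of interpretation rather than computation: the claim has no quantitative content beyond full permutation symmetry, and the only thing worth being careful about is the factorisation $U_{\pi}=U_{S}\otimes U_{\bar{S}}$ (up to the implicit reordering of tensor factors) that makes ``local unitary equivalence across different bipartitions'' meaningful. Once that is granted, the argument is uniform in $m$ and identical for the GHZ and W cases.
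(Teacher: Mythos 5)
Your proof is correct, but it proceeds along a genuinely different route from the paper's. The paper proves the proposition by brute-force enumeration: it lists the $m=1$ and $m=2$ bipartitions explicitly and rewrites both states in each cut with explicit basis labels (e.g.\ the GHZ state as $(\ket{0}\ket{0_4}+\ket{1}\ket{15_4})/\sqrt{2}$ and the W state as $(\ket{0}\ket{1_4}+\ket{0}\ket{2_4}+\ket{0}\ket{4_4}+\ket{0}\ket{8_4}+\ket{1}\ket{0_4})/\sqrt{5}$), observing that the expansion has the same form no matter which parties stand on which side. You instead invoke the fact that both states lie in the symmetric subspace, and convert full permutation invariance into bipartition-independence: any two size-$m$ cuts are related by a permutation unitary $U_\pi$ that fixes the state and, viewed as a map from $\mathcal{H}_S\otimes\mathcal{H}_{\bar S}$ to $\mathcal{H}_{S'}\otimes\mathcal{H}_{\bar S'}$, splits into two relabelling isometries, one per side of the cut; hence the bipartite states across the two cuts are identical up to local unitaries and every entanglement monotone agrees. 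The one point to state carefully is exactly the one you flagged: $U_\pi$ is not a tensor product of unitaries acting \emph{within} $S$ and $\bar S$, but an intertwiner between the two tensor decompositions. What your approach buys is generality and economy: it is uniform in $m$, works verbatim for any $N$ and any permutation-invariant state, so the paper's follow-up remark that ``the above claim can be extended to any $N$'' becomes an immediate corollary rather than a further enumeration. What the paper's approach buys is the explicit bipartite form of the states in each cut, which is what it then feeds into the subsequent log-negativity computations.
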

\begin{proof}
    Let us first consider $m=1$. In this case, the bipartitions possible are, $A_1~vs.~A_2A_3A_4A_5$, $A_2~vs.~A_1A_3A_4A_5$, $A_3~vs.~A_1A_2A_4A_5$, $A_4~vs.~A_1A_2A_3A_5$, and $A_5~vs.~A_1A_2A_3A_4$. In all these bipartitions the GHZ state can be written as, $(\ket{0}\ket{0_4}+\ket{1}\ket{15_4})/\sqrt{2}$. On the other hand the W state can be written as $(\ket{0}\ket{1_4}+\ket{0}\ket{2_4}+\ket{0}\ket{4_4}+\ket{0}\ket{8_4}+\ket{1}\ket{0_4})/\sqrt{5}$. Here the basis for the four-qubit system is considered as $\ket{i_5}_{i=0}^{15}$. Similarly for $m=2$, the bipartitions are $A_1A_2~vs.~A_3A_4A_5$ and all possible such permutations with two parties standing together and the rest three parties are also standing together. In this situation, the GHZ state can be written as $(\ket{0_2}\ket{0_3}+\ket{3_2}\ket{7_3})/\sqrt{2}$, and the W state can be re-written as, $(\ket{0_2}\ket{1_3}+\ket{0_2}\ket{2_3}+\ket{0_2}\ket{4_3}+\ket{1_2}\ket{0_3}+\ket{2_2}\ket{0_3})/\sqrt{5}$. Here the basis for the two-qubit and the three-qubit systems are ${\ket{j_2}_{j=0}^3}$ and ${\ket{k_3}_{j=0}^7}$ respectively. Hence both the states are symmetric under the particular bipartition considered.
\end{proof}
The above claim can be extended to any $N$. The noise models considered in such a way that they affect each qubit in the system in the same way. Hence the interaction with the environment does not change the initial symmetry of the considered states. Hence for the entanglement, we calculate the logarithmic negativity of the final states in all possible bipartitions. Interestingly enough, we see that under dephasing noise, while for GHZ states the values of logarithmic entanglement are same for all bipartitions, for W state, this is not the case. This is due to the fact that, for GHZ states, when written in any bipartition, they ultimately resemble to the structure of a bipartite maximally entangled state. On the other hand, in presence of depolarising noise, GHZ states do not retain this feature of equivalence across all possible bipartitions. Intuitively, this can be an effect of the fact that the depolarising noise takes into account the dimension of the system, which makes all bipartitions behave differently in presence of the noise, even for the GHZ states.\\

\noindent \textit{Local dephasing noise.} Let us start by considering that the qubits are locally interacting with some bosonic reservoirs and hence the total system-reservoir Hamiltonian for a single qubit is given by the sum of the system Hamiltonian ($H_S$), the Hamiltonian of the reservoir ($H_B$), and the interaction Hamiltonian ($H_I$). The particular form that we assume is given by,
\begin{equation}
    H_D/\hbar= \omega_0 \sigma_z + \sum_i \omega_i a_{i}^{\dagger} a_i + \sum_i \sigma_z (g_i a_i + g_i^* a_i^{\dagger}),
    \label{dephase_H}
\end{equation}
where $2\hbar\omega_0$ is the energy gap of the qubit energy levels, $\omega_i$'s are the frequencies of the reservoir modes, $a_i$ ($a_i^{\dagger}$) is the annihilation (creation) operator of the $i$th mode of the reservoir, and $g_i$ is the coupling constant between the system and the \(i\)th mode of the reservoir. Here, $\sigma_z$ is the Pauli z matrix, with the local computational basis being assumed to the eigenstates of \(\sigma_z\). In the interaction picture, the exact master equation corresponding to this Hamiltonian takes the simple form,
\begin{equation}
    \dot{\rho}(t)= \frac{\gamma(t)}{\omega_0} \left( \sigma_z \rho(t) \sigma_z - \rho(t) \right).
    \label{dephase_ME}
\end{equation}
 We assume that the initial state of the total system is in product form in the qubit to environment partition, and the environment is assumed 
 %of the state of the system and that of the environment (which is considered 
 to be initialised in the thermal state at a temperature $T$, so that the initial state of the environment is $e^{-\beta H_B}/\text{Tr}[e^{-\beta H_B}]$ with $\beta=1/(k_B T)$), with \(k_B\) being the Boltzmann constant. Following these assumptions the time-dependent dephasing rate takes the form \cite{BP_02}, 
\begin{equation}
    \gamma(t) = \int d\omega~\mathcal{S}(\omega)~\text{coth}\left(\frac{\hslash \omega}{2 k_B T}\right) \frac{\text{sin}(\omega t)}{\omega}.
    \label{int_gamma}
\end{equation}
Here, $\mathcal{S}(\omega)$ is the reservoir spectral density, which, in the continuum limit of the modal frequencies, is related to $g_i$'s as $\sum_i |g_i|^2 \rightarrow \int d\omega \mathcal{S}(\omega) \delta(\omega_i -\omega)$. The dynamics given in Eq.~(\ref{dephase_ME}) represents the usual dephasing channel having the usual Kraus operators $K_0 = \sqrt{1-\eta(t)/2}~\mathcal{I}$ and $K_1 = \sqrt{\eta(t)/2}~\sigma_z$. The time-dependent dephasing factor $\eta(t)$ is related to $\gamma(t)$ as $\eta(t)=1-\text{exp}[-2\int_0^t \gamma(t') dt']$. Now let us consider that the reservoir is governed by the Ohmic spectral density function given by 
\begin{equation}
    \mathcal{S}(\omega) = \frac{\omega^s}{\omega_c^{s-1}} e^{-\omega/\omega_c},
\end{equation}
where $\omega_c$ is the cut-off frequency of the reservoir and $s$ is the Ohmicity parameter. Using this spectral density in Eq.~(\ref{int_gamma}), one can get the exact expression for the dephasing decay rate at a given temperature~\cite{HJM_13}. At zero temperature it takes the form,
\begin{equation}
    \gamma_{T=0}(t,s)= \omega_c [1+(\omega_c t)^2]^{-\frac{s}{2}} \Gamma[s] \text{sin}(s~\text{arctan} (\omega_c t)).
    \label{gamma_dephase}
\end{equation}
Here $\Gamma[\cdot]$ denotes the Euler gamma function. Note that 
%in the unit of $\omega_c$ depending on the value of $s$ (
for $s > 2$, the decay rate takes negative values for a given time range. Such negative values of decay rate can be argued to bring in   memory effects in the system's  dynamics. This in turn induces the features of non-Markovianity in the evolution as there is a possibility of back-flow of information from the reservoir to the attached qubit. 

In this paper, we consider the cases with different values of $s$ and show how the entanglement of the final state saturates to a non-zero value with respect to time. We show that with the increasing number of qubits, though the value at which the final state entanglement saturates decreases, the decrement is converging with the number of qubits.\\

\noindent \textit{Local depolarising noise.} Let us now consider a further system-environment coupling  where the total Hamiltonian of the system and the bath is given by
\begin{equation}
    H_{DP}/\hbar= \omega_0 \sigma_z + \sum_i \omega_i a_{i}^{\dagger} a_i + \sum_{j=x,y,z} \sum_i \sigma_j (g_i a_i + g_i^* a_i^{\dagger}).
    \label{depo_H}
\end{equation}
%Here, the interaction Hamiltonian changes the form with respect to the previous case. 
Considering this Hamiltonian, the exact master equation takes the following form:
\begin{equation}
    \dot{\rho}(t)= \sum_{j=x,y,z} \frac{\gamma_j(t)}{\omega_0} \left( \sigma_j \rho(t) \sigma_j - \rho(t) \right).
    \label{depo_ME}
\end{equation}
Here, $\sigma_x$, $\sigma_y$ and $\sigma_z$ are Pauli matrices. Note that, for $\gamma_x (t) = \gamma_y (t) = 0$, we get back the case corresponding to the dephasing channel. Now, in the previous case, the explicit form of the decay rate can be evaluated because the system Hamiltonian commutes with the total Hamiltonian $H_D$ given in Eq. (\ref{dephase_H}). Finding the explicit forms of the $\gamma_j$'s become much involved in this case as the interaction Hamiltonian has changed. Instead, here we consider the forms of the decay rates keeping certain constraints in mind. From \cite{CW_13, CM_14}, in case of above evolution, the dynamics is Markovian (i.e. complete positivity (CP)-divisible) if and only if $\gamma_x(t) \geq 0$, $\gamma_y(t) \geq 0$, and $\gamma_z(t) \geq 0$ for all time $t \geq 0$. Alongside, only P-divisibility is equivalent to the following conditions:
\begin{align}
    \gamma_x(t)+\gamma_y(t) \geq 0,\nonumber\\
    \gamma_y(t)+\gamma_z(t) \geq 0,\nonumber\\
    \gamma_z(t)+\gamma_x(t) \geq 0,
    \label{p-div_cond}
\end{align}
for all time $t \geq 0$. Violation of any of these inequalities in Eq. (\ref{p-div_cond}) implies that the dynamics is essentially non-Markovian. Note that, breaking P divisibility definitely suggests CP divisibility but the reverse is not true. This follows from the concepts of $k$-positive maps. 

In this paper, we choose one of the $\gamma$'s take negative values for certain period of time $t$. Explicitly, we take $\gamma_z(t) = \alpha~\text{sin} (t)$ which definitely makes $\gamma_z(t) \leq 0$ for certain range of $t$ breaking the CP divisibility. Evidently, the P divisibility would be broken depending on the choices of values of the other two decay rates.

\section{Characterising the final state entanglement under noise}
\label{sec_3}

\noindent \emph{Results for Dephasing noise.} Let us start by describing the simplest case involving three qubits, i.e. $N=3$ distributed among three parties $A_1$, $A_2$, and $A_3$. First, let us assume that the system is initialised in GHZ state. Hence the initial state looks like $\ket{\psi}_{GHZ}^3 = (\ket{000}+\ket{111})/\sqrt{2}$. According to our protocol, each qubit interacts with identical baths (three baths in this case) at zero temperature at any instant of time $t$. 
\begin{figure}[htp]
\centering
\fbox{
\subfigure[$3$-qubit GHZ state (N=3)]{\includegraphics[scale=0.1]{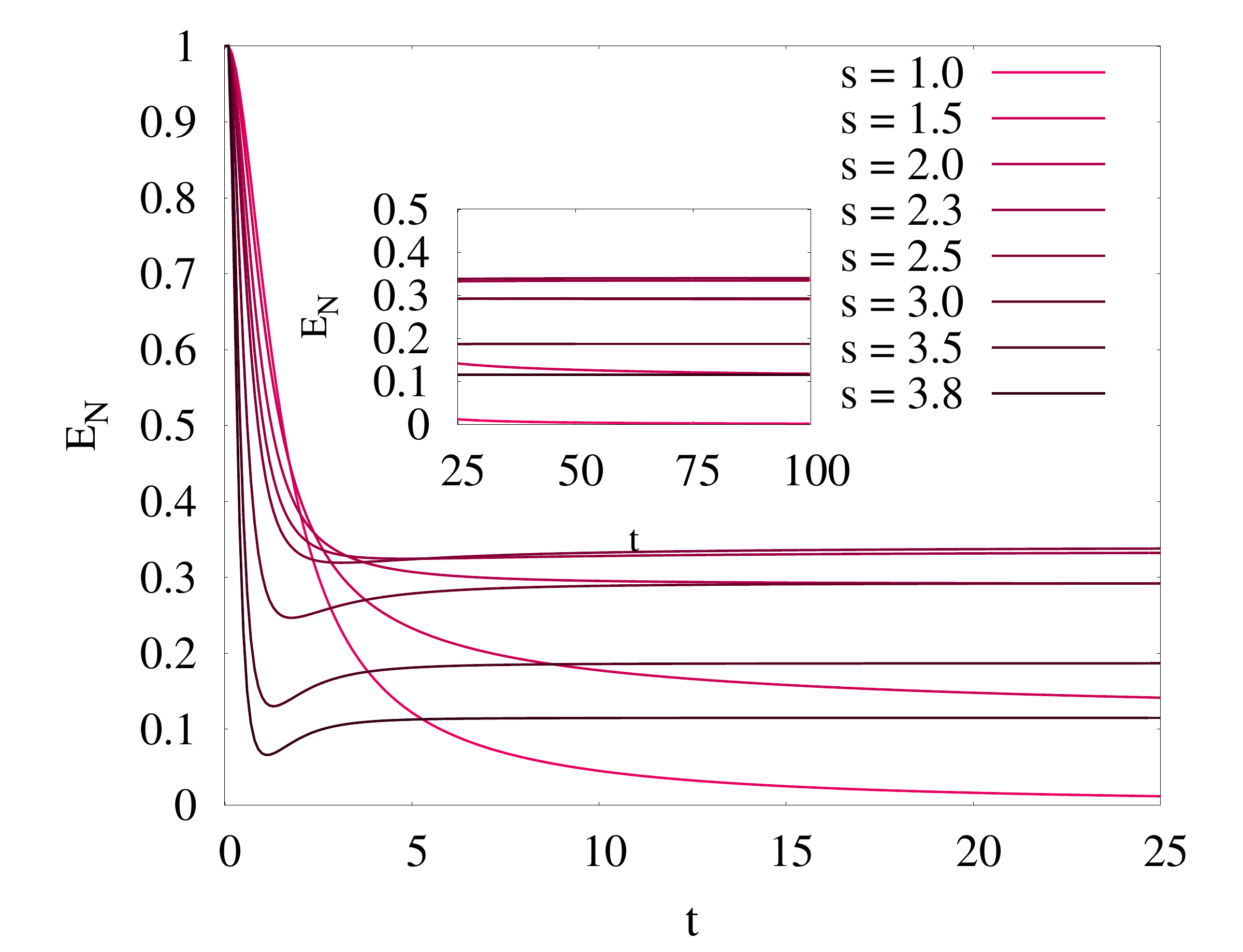}}
%\hfill
\quad
\subfigure[$5$-qubit GHZ state (N=5)]{\includegraphics[scale=0.1]{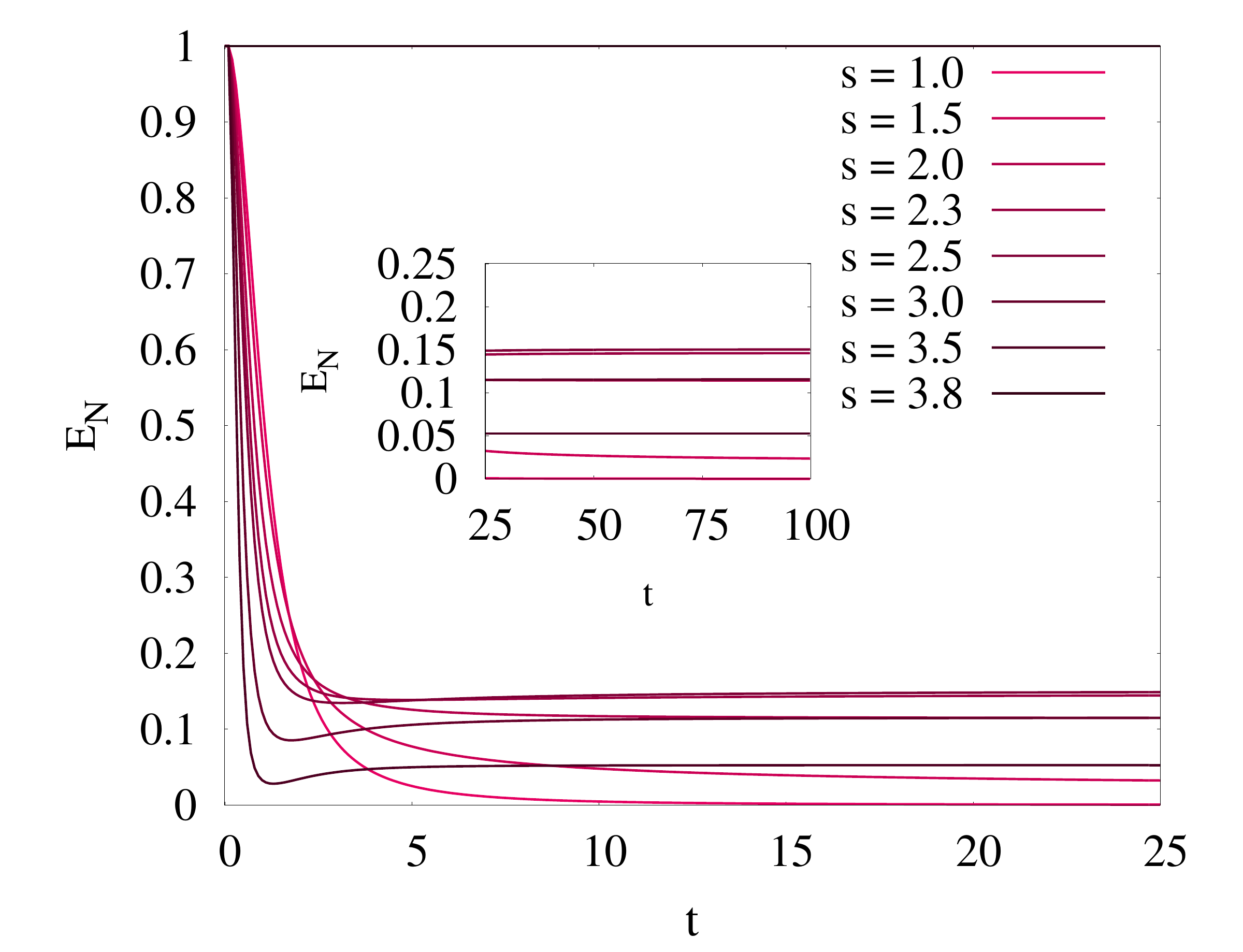}}}
\caption{\footnotesize{Entanglement of the final state under dephasing noise with time when the input state is a GHZ state. The vertical axes are in ebits, while the horizontal ones are dimensionless. 
}}
\label{E_t_GHZ}
\end{figure}

\begin{figure*}[htp]
\centering
\fbox{
\subfigure[Entanglement variation with decay rate for different number of qubits at $t=30$.]{\includegraphics[scale=0.135]{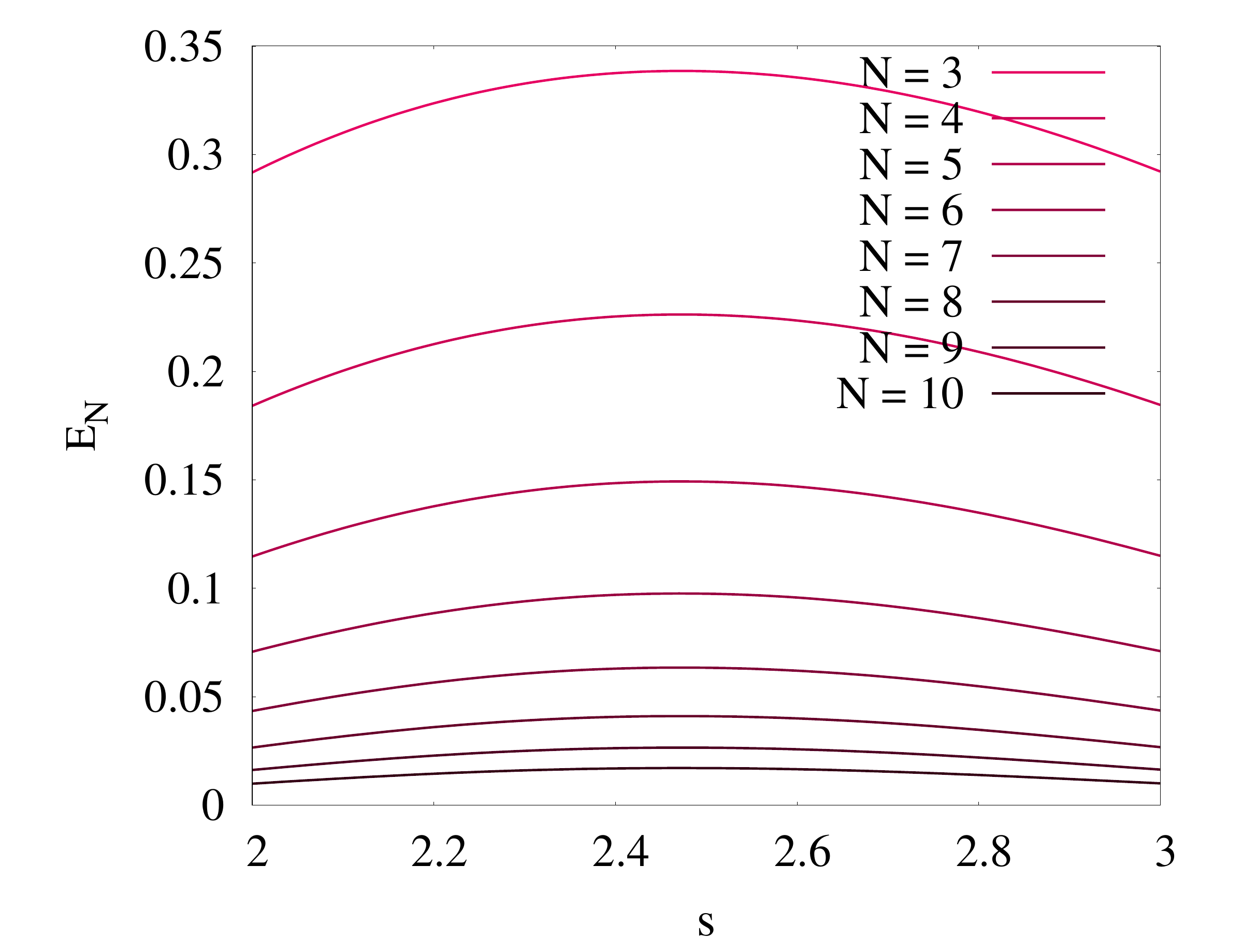}}
\quad
\subfigure[Decay of entanglement with number of qubits for different values of decay rate at $t=30$.]{\includegraphics[scale=0.135]{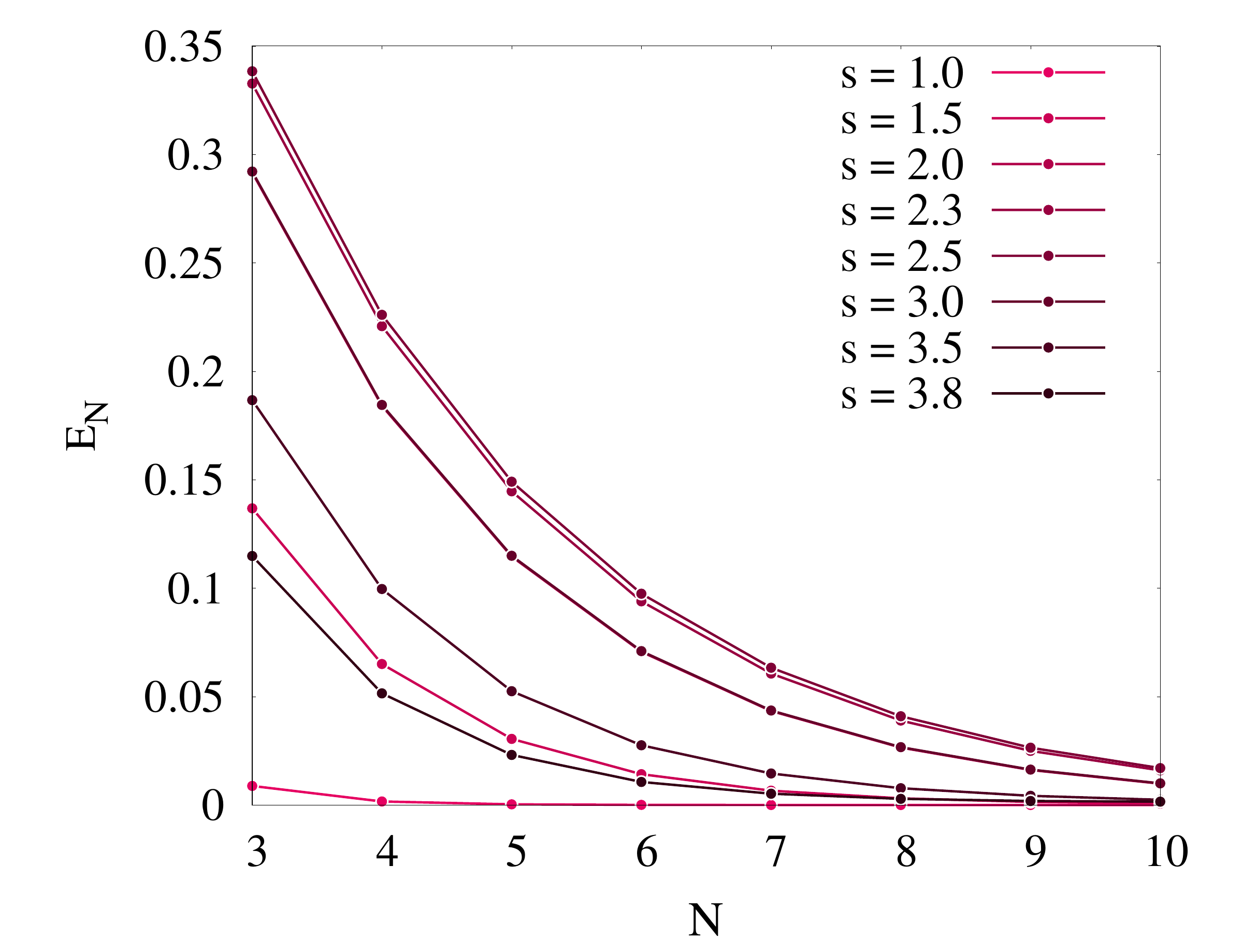}}
\quad
\subfigure[Decay of entanglement with number of qubits for $s=2.47$ at $t=30$. Here the dots indicate the exact data points, while $f(N)$ gives the extrapolated version of the same.]{\includegraphics[scale=0.135]{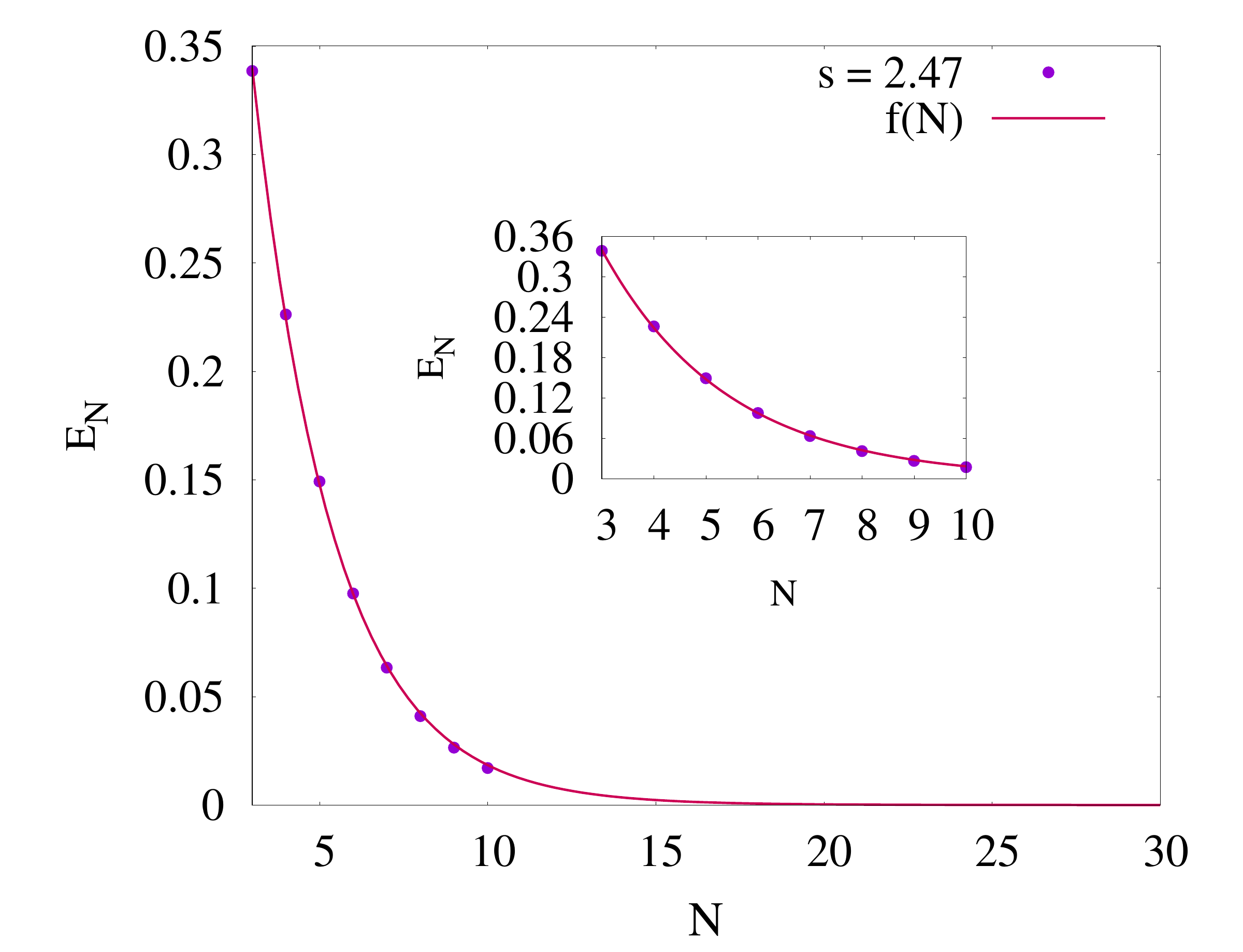}}}
\caption{\footnotesize{Entanglement characterisation of the final state generated by applying dephasing noise on GHZ state. The vertical axes are in ebits, while the horizontal ones are dimensionless. }}
\label{E_N_s_t30_GHZ}
\end{figure*}

To obtain the evolution of the state of the system undergoing the environmental interaction, we solve the master equation. In this particular case, mathematically one can write the dynamics of the state as,
\begin{eqnarray}
    \dot{\rho}(t) &=& \frac{\gamma(t)}{4 \omega_0} [ (\sigma_z \otimes \mathbb{I} \otimes \mathbb{I}) \rho(t) (\sigma_z \otimes \mathbb{I} \otimes \mathbb{I}) \nonumber\\
    && + (\mathbb{I} \otimes \sigma_z \otimes \mathbb{I}) \rho(t) (\mathbb{I} \otimes \sigma_z \otimes \mathbb{I}) \nonumber\\
    && + (\mathbb{I} \otimes \mathbb{I} \otimes \sigma_z) \rho(t) (\mathbb{I} \otimes \mathbb{I} \otimes \sigma_z) - \rho(t) ]
    \label{3qubit_dephase}
\end{eqnarray}
In the above equation the dynamics is considered to be governed by the dephasing noise as can be seen from the dissipative part of the equation. In case of depolarising noise, Eq. (\ref{3qubit_dephase}) is accordingly modified. As the number of qubits increases, the above expression will be modified in an iterative fashion. Now, we let the system evolve from $t=0$ to $t=100$ under the given dynamics, where the decay rate $\gamma(t)$ is given by Eq. (\ref{gamma_dephase}). At every instant of time, we find out the state $\rho(t)$ by solving the Eq. (\ref{3qubit_dephase}). For the particular case of three-qubit system we evaluate the entanglement of the final state in any of the $one~vs.two$ bipartitions, i.e. either in $A_1~vs.~A_2A_3$, or in $A_2~vs.~A_3A_1$, or in $A_3~vs.~A_1A_2$, as the state remains symmetric across all such bipartitions. We evaluate logarithmic negativity of the given bipartite system considering the above mentioned bipartitions. 
%Interestingly in case of GHZ state as the initial state, all the possible bipartitions provide same value of entanglement. This is true for any number of qubits. This occurs due to the symmetry that is present in the initial GHZ state (in all possible bipartitions, the states identify as a maximally entangled states giving the maximum values of logarithmic negativity). 

In Fig.~\ref{E_t_GHZ}, the variation of logarithmic negativity of entanglement of the final state in any given bipartition is plotted against time. Note that, for different values of $s$, we obtain different saturation for the final state entanglement. For $s<2$ the dynamics is surely Markovian and in that domain the value of logarithmic negativity reduces with time and touches zero after a certain period of time. However, the non-Markovian features in the dynamics for $s>2$ induce information back-flow from the environment to the system, which in turn helps to retain a non-zero value of entanglement in the output end and the final value saturates with time. In the very beginning of the time scale, the entanglement falls down and retrieves the value where it finally saturates. As the number of qubits increases, the saturation of entanglement occurs at a lower value for any bipartition. Two particular cases having two different number of qubits i.e. for $N=3$ and $N=5$, the characterisations of the final state entanglement with time are illustrated in Fig (\ref{E_t_GHZ}(a) and (b)) respectively. Numerically, it is evident that the similar nature of the entanglement of the output state can be observed for $N=3$ to $N=10$ in the non-Markovian domain of $s$ for a large period of time.

%We show numerically, for long time range, at least up to $t=100$, the entanglement of the final state saturates at a non-zero value for $N=3$ to $N=10$ in the non-Markovian domain of $s$. 

Now, we concentrate only on the region where the decay rate takes negative value i.e. for $s>2$. For these values of $s$, the final state retains the entanglement more than the Markovian domain. From Fig. (\ref{E_N_s_t30_GHZ} (a)), it is evident that the final state retains the maximum amount of entanglement in the region between $s=2.3$ to $s=2.5$ for all $N$. Interestingly we find that the final state takes the maximum saturating value of entanglement for $s=2.47$ and this is valid for any $N$. Fig. (\ref{E_N_s_t30_GHZ} (b)) indicates how the saturating value of final state entanglement falls off with increasing number of qubits. From these plots, it can also be noticed that the maximum improvement occurs between the same range of $s$. Now we plot the particular case corresponding to $s=2.47$ in Fig. (\ref{E_N_s_t30_GHZ} (c)). Here, the value of entanglement falls with the number of qubits as $E_N = a e^{-c (N-3)} +b^2$ with $a=0.3399$, $b=7.0847 \times 10^{-5}$ and $c=0.4167$. Evidently, for other values of $s$ the decay of entanglement with increasing number of qubits follow the same pattern with different parameterisation of $a$, $b$ and $c$. 
%More details can be found of appendix. 
From the extrapolation of $E_N$ with respect to $N$ as illustrated in Fig (\ref{E_N_s_t30_GHZ} (c)) after $N=10$, we see that the final state retains a non-zero value of entanglement up to $N=16$ in case of dephasing noise before it gets completely diminished by the presence of noise.\\

\begin{figure*}[htp]
\centering
\fbox{
\subfigure[Decay of entanglement in $1$-Rest with number of qubits for different values of decay rate at t = 30.]{\includegraphics[scale=0.1]{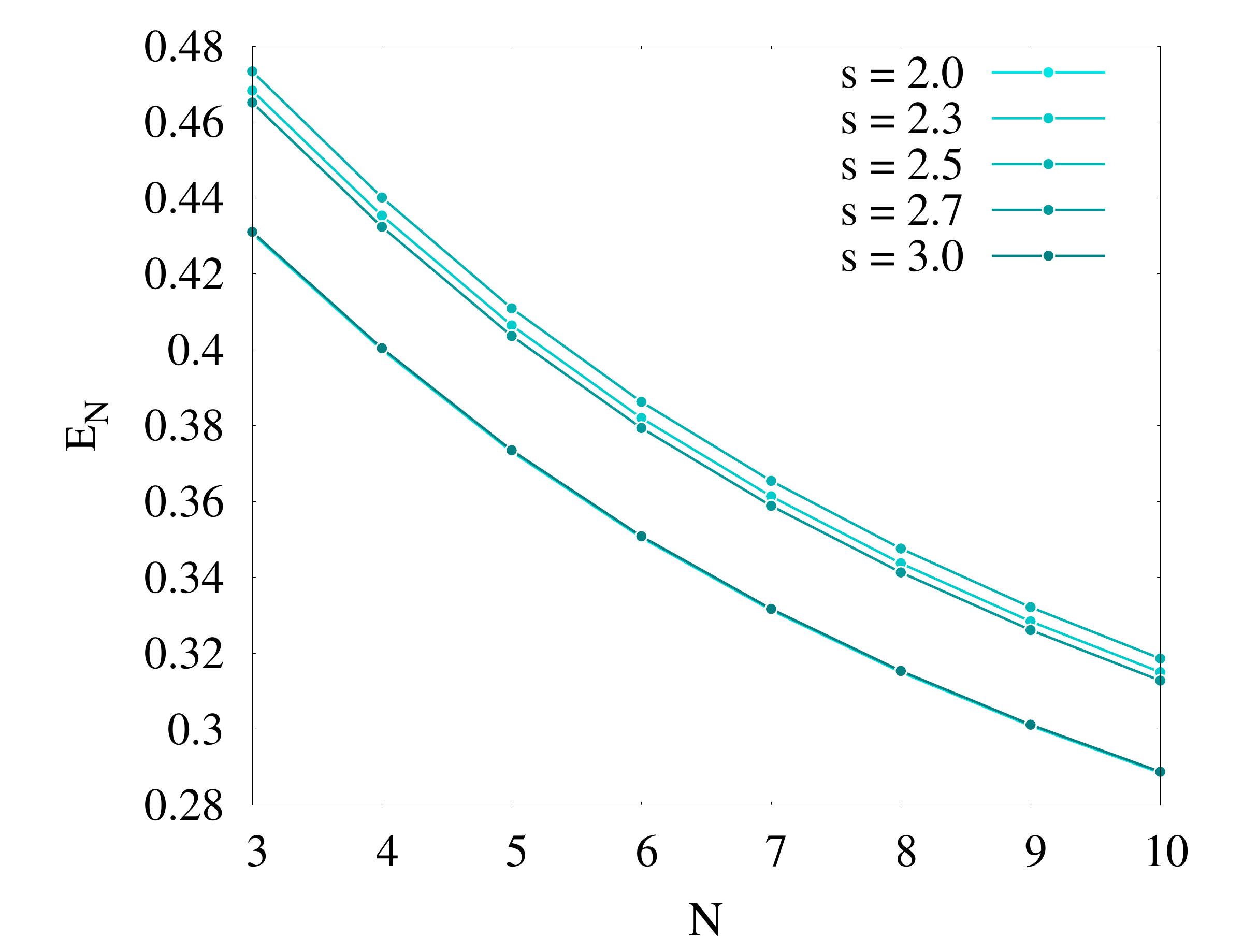}}
\quad
\subfigure[Entanglement pattern in highest-cut with number of qubits for different values of decay rate at t = 30.]{\includegraphics[scale=0.1]{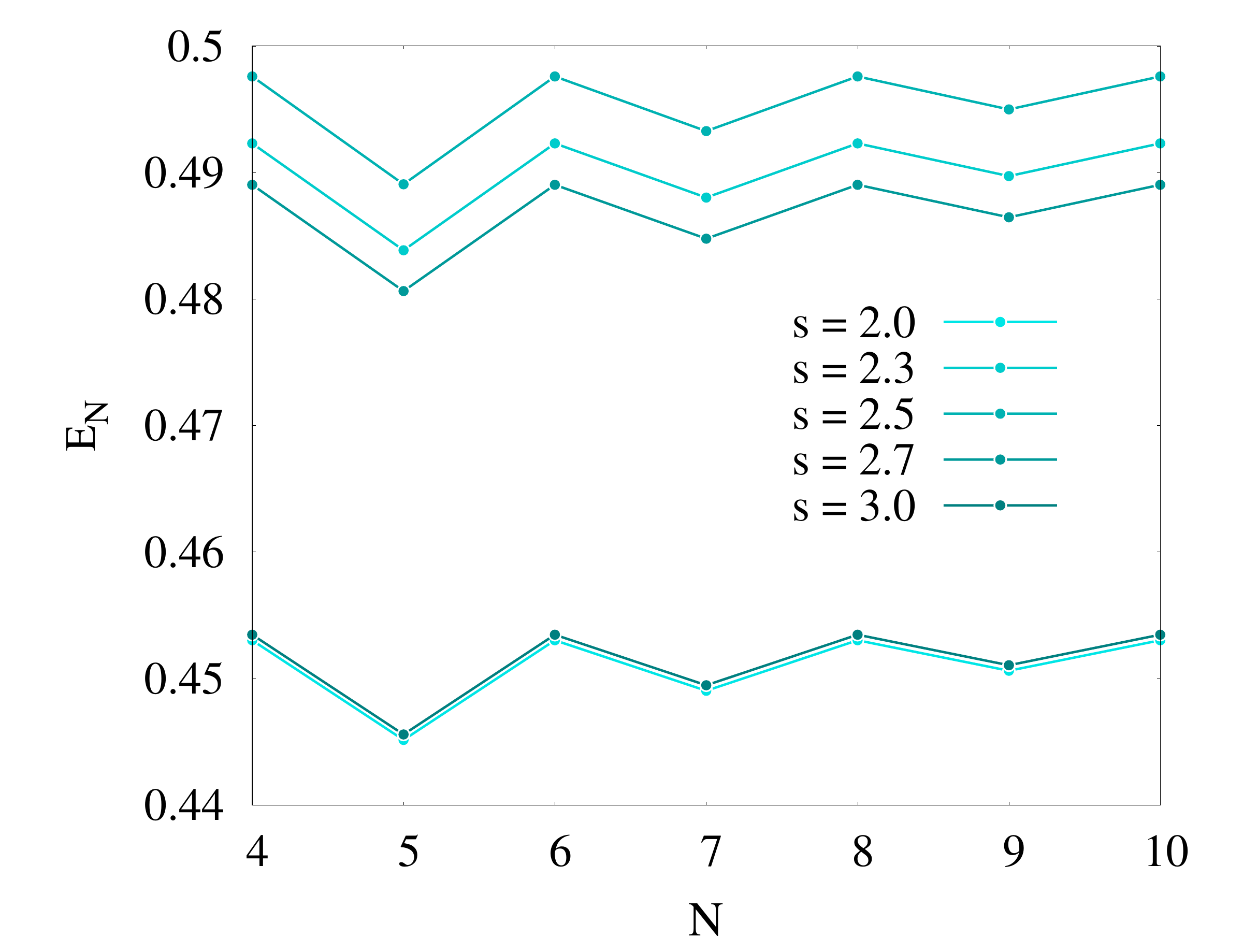}}
\quad
\subfigure[Decay of entanglement in $1$-Rest with number of qubits for $s=2.47$ at $t=30$.]{\includegraphics[scale=0.1]{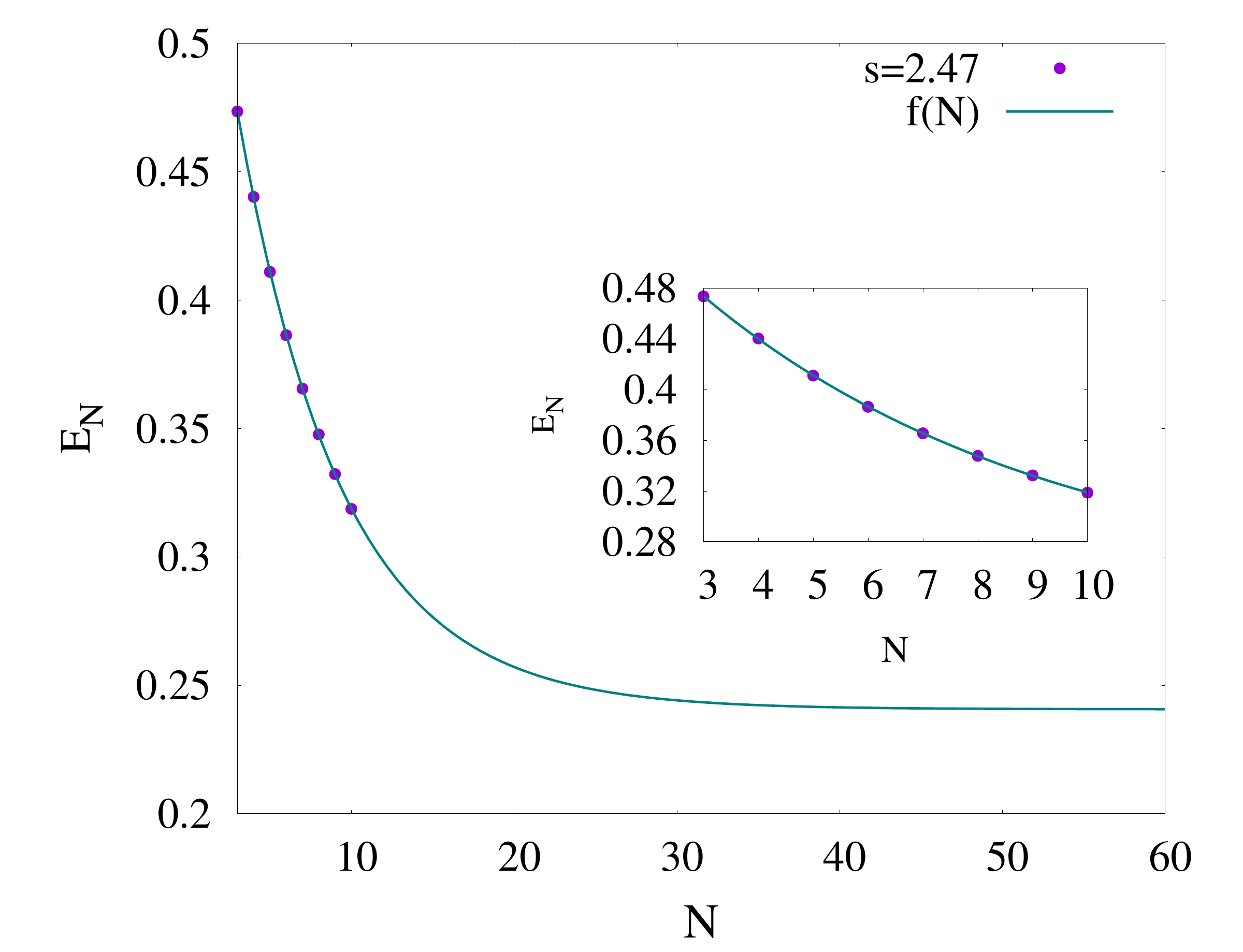}}
\quad
\subfigure[Entanglement pattern in highest-cut with number of qubits for $s=2.47$ at $t=30$.]{\includegraphics[scale=0.1]{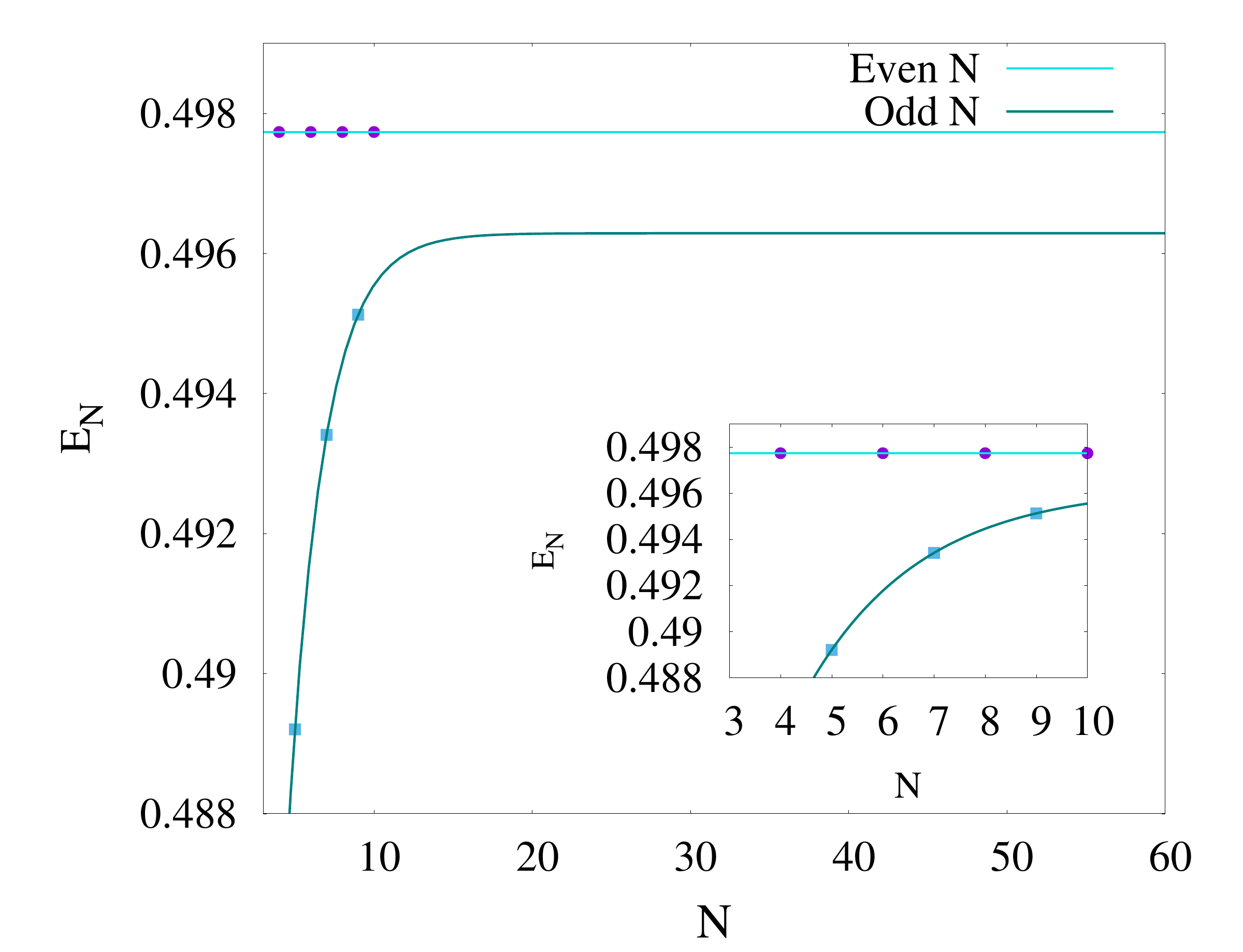}}}
\caption{\footnotesize{Entanglement characterisation of the final state generated by applying dephasing noise on W state. The vertical axes are in ebits, while the horizontal ones are dimensionless. }}
\label{E_N_s_t30_W}
\end{figure*}

%All the plots shown are for one or $(N-1)$ mode excited W states (same results are obtained).

Next, we move our discussion to the case when the initial state is a W states with single to $(N-1)$ mode excitation. The prescription for the interaction with the environment is same as the previous case and the dynamics of the state under the noise is given in Eq. (\ref{3qubit_dephase}). In this case the results are more involved as all the bipartitions are not equivalent except for $N=3$. From $N=4$ to $N=10$, we explicitly discuss two cases, i) the ``$1$-Rest" scenario, when the bipartition is between \textit{one} party standing alone with the rest of the parties standing together, and ii) the ``highest-cut" scenario. In case of the ``highest-cut", for even $N$ the bipartition is between $N/2$ parties with the remaining $N/2$, whereas for odd $N$, the bipartition is between the $(N-1)/2$ parties with other $1+(N-1)/2$ parties. For example, when $N=7$ and the system is distributed among $A_1$, $A_2$, $A_3$, $A_4$, $A_5$, $A_6$, $A_7$, then the unique bipartitions possible are, $A_1~vs.~A_2A_3A_4A_5A_6A_7$, $A_1A_2~vs.~A_3A_4A_5A_6A_7$, $A_1A_2A_3~vs.~A_4A_5A_6A_7$. Any other bipartition possible, are mere permutation of the parties, under which the state remains symmetric, as discussed earlier. Hence for $N=7$, we call the $A_1~vs.~A_2A_3A_4A_5A_6A_7$ bipartition as the ``$1$-Rest" case and $A_1A_2A_3~vs.~A_4A_5A_6A_7$ bipartition as the ``highest-cut" case. The similar nomenclature is extended for any $N$.
%Now regarding the protocol for interaction with environment, similar to the case of the GHZ states, here also the individual qubits of the state interact with separate identical reservoirs at zero temperature, at a given instant of time $t$. 
We let the system evolve under the dephasing noise from $t=0$ to $t=100$ and analyse the characterisation of entanglement during the evolution and after a certain time passed. As we know that the non-Markovian effects are only prominent for $s>2$ in Eq. (\ref{gamma_dephase}), here we concentrate on the particular region. We see that similar to the case of GHZ states, the value of output state entanglement (in terms of logarithmic negativity) first decreases and then saturates at a particular value with increasing time for all the bipartitions. The value of the saturation of the final state entanglement is much higher than the GHZ case. First, let us discuss how the entanglement of the output state ($E_N$) changes with the increasing number of qubits $N$ in the region of saturation (we choose, $t=30$). Two corresponding situations i.e. ``$1$-Rest" and ``highest-cut" are plotted in Fig (\ref{E_N_s_t30_W} (a)) and (\ref{E_N_s_t30_W} (b)) respectively. In the ``$1$-Rest" case, note that the decay of entanglement almost coincides for $s=2$ and $s=3$ while the best is obtained again between $s=2.3$ and $s=2.5$. Now for the ``highest-cut" scenario the entanglement of the output state shows a particularly interesting feature when the initial state is a W state. In this case, we see that for even number of qubits, the saturation occurs at the same value making the output state entanglement independent with the number of qubits. Whereas, this is not the case for odd number of qubits. Instead the saturation value increases with $N$ as can be seen from Fig (\ref{E_N_s_t30_W} (b)). Moreover we find, similar to the case of the GHZ states, the best effect of non-Markovianity is achieved for $s=2.47$.

Now let us concentrate on the particular case with $s=2.47$ and the time $t=30$. ``$1$-Rest" case decay of final state entanglement takes the similar form as GHZ state. We have, $E_N = a e^{-c (N-3)} +b^2$ for all values of $s>2$. For $s=2.47$, the parameters take the following values: $a=0.2329$, $b=0.4906$ and $c=0.1559$. For other values of $s$, the final state entanglement follows the similar structure with different parametrisation. 
%can be found in the appendix.
This scenario is illustrated in Fig (\ref{E_N_s_t30_W} (c)) where the inset picture gives the numerically obtained situation for $N=3$ to $N=10$. By extrapolating with respect to the given function, we see that the final state entanglement in ``$1$-Rest" case saturates at a non-zero value, in particular $E_N=0.24$ as illustrated in the same figure making W states $24\%$ robust under dephasing noise in ``$1$-Rest" cut. Now let us move on to the ``highest-cut" case. The inset figure of Fig (\ref{E_N_s_t30_W} (d)) portrays the two distinct cases of final state entanglement for even and odd number of qubits separately. For even $N$, the final state entanglement saturates at a value $E_N=0.4977$ with $N$ and for the odd case, the output state entanglement variation with $N$ is governed by $E_N=1/(a e^{-cN}+b^2)$ with $a=0.2832$, $b=1.4195$ and $c=0.4546$. Extrapolating these data points, one can obtain the characterisation of entanglement for large $N$ as shown n Fig (\ref{E_N_s_t30_W} (d)). While for even number of qubits, the value of $E_N$ does not change, for odd number of qubits it increases and saturates at a value of $E_N=0.4962$. This evidently indicates that there exists a clear dichotomy between the even and odd number of qubits in case of multipartite system when we are initiating any task with W state. Note that, in the even qubit scenario the W states are $49.77\%$ robust whereas in the odd qubit scenario it $49.62\%$ robust in presence of dephasing noise.\\

From the detailed analysis of the GHZ and W state under dephasing evolution, it is clear that the W state is way more robust than GHZ state under this type of noise. While for GHZ state, entanglement vanishes with increasing number of parties, W state retains a non-zero amount of entanglement for any bipartition in the asymptotic limit. Also, while there is an information back-flow from the environment to the system (when $\gamma(t)$ takes negative value), evidently the system retains more amount of correlation under this noise model compared to the Markovian version of the same channel (i.e. for $s<2$). \\

\noindent \emph{Results for Depolarising noise.} As discussed in the previous section, in case of the  depolarising noise, we choose either of the $\gamma$'s such that they take negative values for certain interval of time of the evolution. First we consider the $\gamma_z(t)$ to be of the form given in Eq. (\ref{gamma_dephase}) as the corresponding part of the master equation in Eq. (\ref{depo_ME}) is same as the dephasing noise along with $\gamma_x(t)=\gamma_y(t)=0.1$. In this scenario, though the conditions given via Eq. (\ref{p-div_cond}) are fulfilled for the dynamics, it does not show any significant improvement of final state entanglement with time. 

Next we choose the case, $\gamma_z(t)= \alpha \sin(t)$ with $\alpha=1.0$ along with $\gamma_x(t)=\gamma_y(t)=0.1$. Let us consider the case when the initial state is GHZ. The corresponding scenario is depicted in Fig. (\ref{E_t_GHZ_depo}). The prescription of the interaction with the environment at a given instant of time is similar to the previous case, i.e. each qubit interact with identical reservoirs corresponding to the depolarising noise.But unlike dephasing scenario, here when the individual qubit undergoes interaction with the noise, the final state entanglement takes different values for different bipartitions even for GHZ state. Hence as before we again show the entanglement characterisation for two cases, namely, ``$1$-Rest" and ``highest-cut" in Fig (\ref{E_t_GHZ_depo} (a)) and (b) respectively. We deal explicitly with qubits from $N=3$ to $N=10$. Note that, in both the situations of bipartitions, the entanglement of the final state falls in the beginning and then revives a non-zero value after certain period of time after the collapse. In the ``highest-cut" scenario, for $N\geq7$ there is a second round of entanglement revival that occurs around $t=12$ to $t=13$ which is shown in the inset of Fig (\ref{E_t_GHZ_depo} (b)). The same study has been done for $N$-qubit systems when the initialisation is done in W state. Here, we portray the case of W state with either single or $(N-1)$ mode excitation. In the case of W states, the output state entanglement characterisation is plotted in Fig. (\ref{E_t_W_depo}). Note that, in case of depolarising noise GHZ states are more robust than W states with increasing number of parties, which is the opposite situation compared to the dephasing noise.

\begin{figure}[htp]
\centering
\fbox{
\subfigure[Entanglement characterisation in ``$1$-Rest" scenario.]{\includegraphics[scale=0.1]{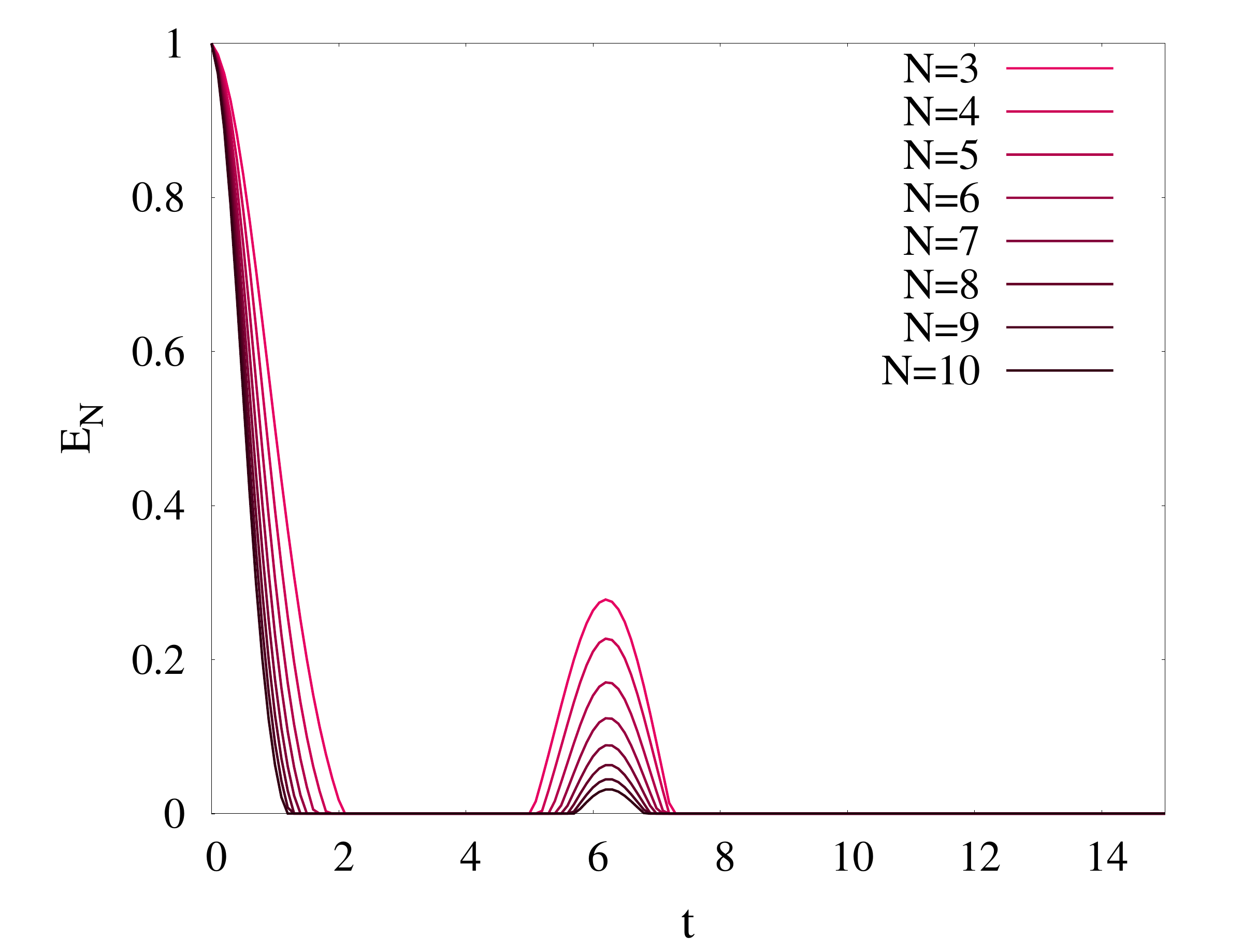}}
%\hfill
\quad
\subfigure[Entanglement characterisation in ``highest-cut" scenario.]{\includegraphics[scale=0.1]{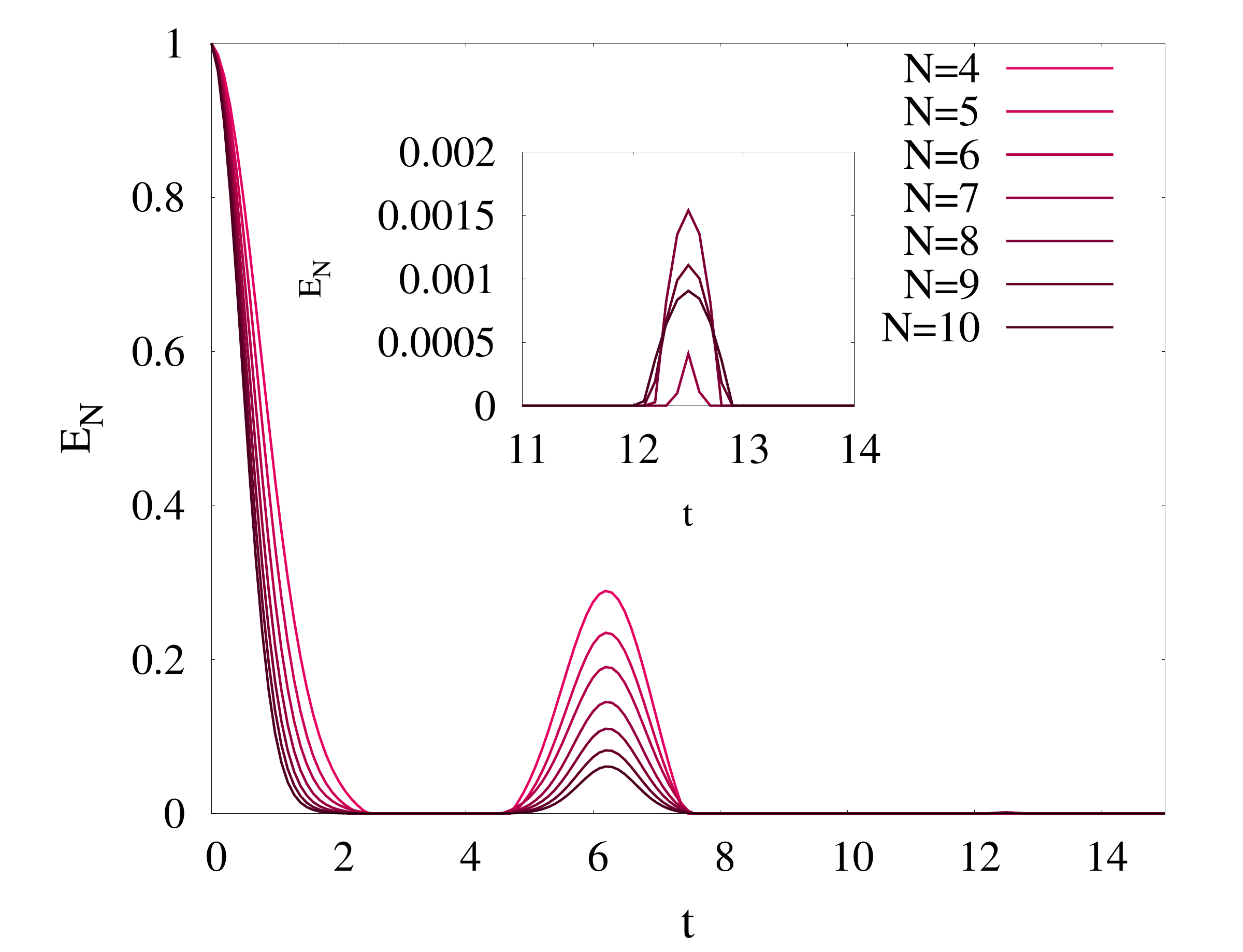}}}
\caption{\footnotesize{Entanglement of the final state under depolarising noise with time when the initial state is GHZ. The vertical axes are in ebits, while the horizontal ones are dimensionless. 
}}
\label{E_t_GHZ_depo}
\end{figure}

\begin{figure}[htp]
\centering
\fbox{
\subfigure[Entanglement characterisation in ``$1$-Rest" scenario.]{\includegraphics[scale=0.1]{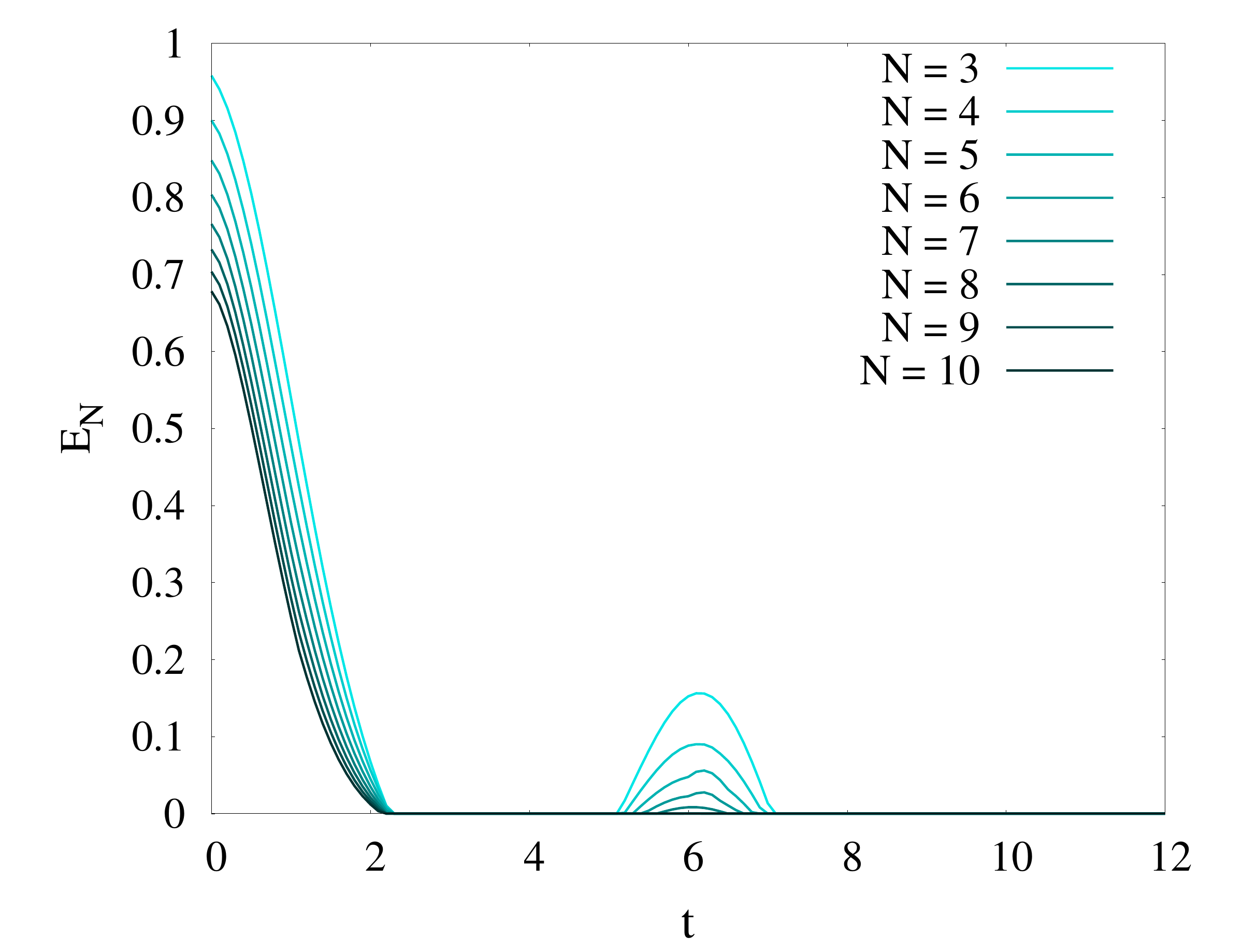}}
%\hfill
\quad
\subfigure[Entanglement characterisation in ``highest-cut" scenario.]{\includegraphics[scale=0.1]{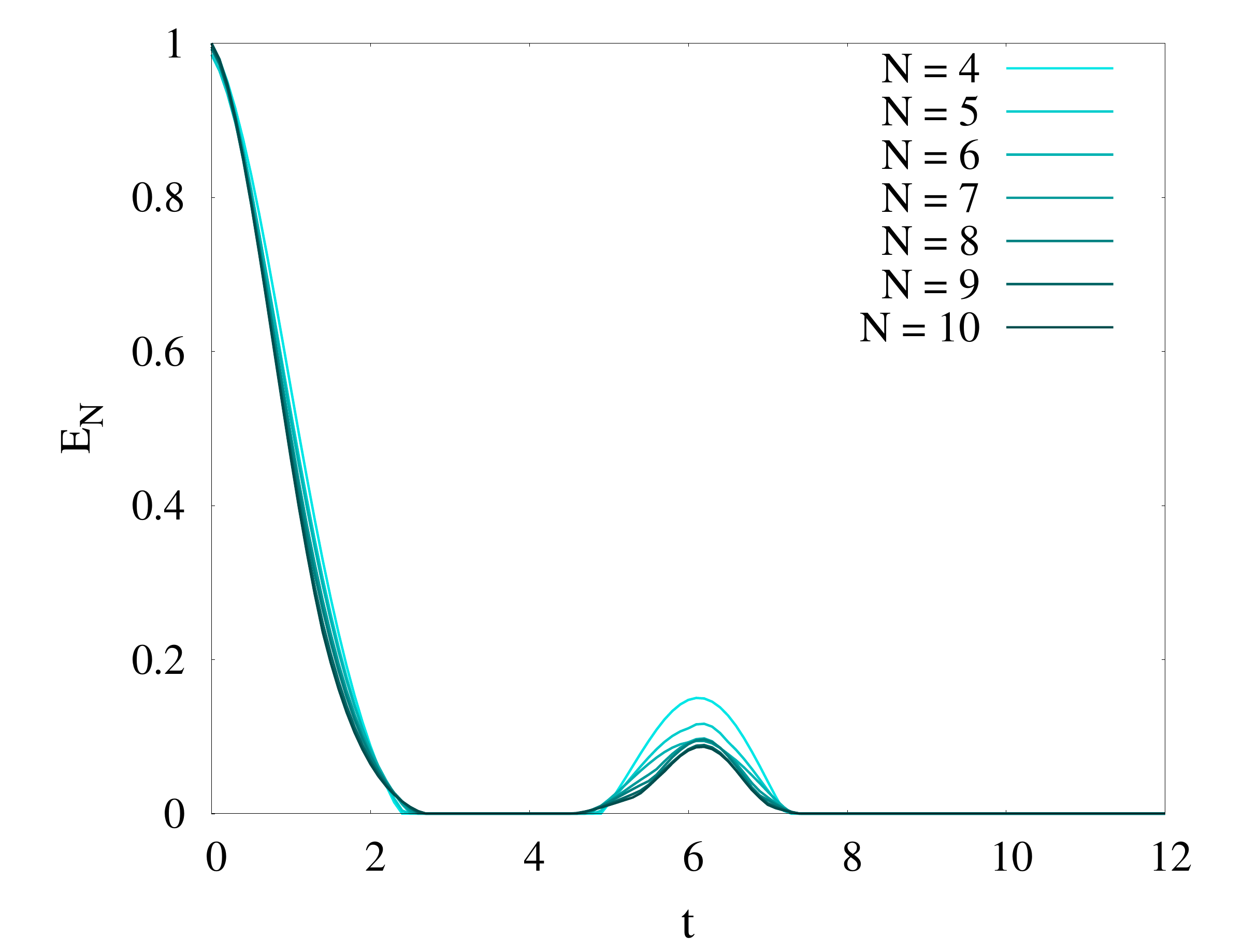}}}
\caption{\footnotesize{Entanglement of the final state under depolarising noise with time when the initial state is W. The vertical axes are in ebits, while the horizontal ones are dimensionless. }}
\label{E_t_W_depo}
\end{figure}

%\noindent \emph{In presence of Dissipative noise.}
\section{Conclusion}
\label{sec_4}

In this article, we presented a detailed analysis regarding entanglement characterisation of certain multipartite quantum states when exposed to different noisy environments. We considered  the standard \(N\)-party Greenberger-Horne-Zeilinger and W  states as the multiparty states. The noise models taken into account were dephasing and depolarising ones. The noisy evolution of the system was considered to be governed by the Gorini-Kossakowski-Sudarshan-Lindblad master equation. By imposing non-Markovian versions of these channels, we obtained the evolved states at arbitrary instants of time. Comparison of the channels in their Markovian and non-Markovian domains made it  evident that the back-flow of information from the environment to the system, whenever present, induces an effect on the entanglement of the final states, making them more robust with respect to the Markovian case. 

The dynamical picture indicated that the W  states are more robust than GHZ in presence of dephasing noise. While we numerically obtained the exact data for $3$-qubit to $10$-qubit systems, we characterised the entanglement content of the final state for larger number of qubits by extrapolation. While GHZ states loses the retained amount of entanglement with  increasing number of qubits, we showed that  W  states become robust, with a non-zero value of entanglement, in presence of dephasing noise, and this is valid even in the asymptotic regime. We predicted the limit of  sustainable entanglement of the final states in both the cases. Alongside, we found that there exists an even-odd dichotomy for number of qubits in case of W states when exposed to dephasing noise. In case of an even number of qubits, the W states were shown to be $49.77\%$ robust, whereas in case of an odd number of qubits, the remaining entanglement saturated at a particular value, making the states $49.62\%$ robust under the particular noise, in the asymptotic limit. 

Next, in presence of depolarising noise, we found that when we consider the Born-Markov approximation, the dynamics governed by the Markovian master equation diminishes the presence of entanglement with time and the amount of the remaining entanglement vanishes 
%earlier
with an increase in the number of qubits. Considering the corresponding non-Markovian evolution, we showed that it is possible to regain the lost entanglement with time, via a proper choice of decay parameters, after the collapse.

\section*{Acknowledgement}

SG would like to thank Samyadeb Bhattacharya and Srijon Ghosh for fruitful discussions. The authors acknowledge partial support from the Department of Science and Technology, Government of India through the QuEST grant (grant number DST/ICPS/QUST/Theme-3/2019/120).

\bibliography{ref_NM_Ent}
    
\end{document}